\definecolor{petrol}{RGB}{117,181,190}
\newtheorem{lemma}{Lemma}
\newtheorem{remark}{Remark}
\newcommand{\R}{\mathbb{R}_{>0}}
\newcommand{\C}{{\cal C}} 
\newcommand{\CS}{\Lambda} 
\newcommand{\hon}{\mathcal{H}} 
\newcommand{\hsp}{s^\hon} 
\newcommand{\hsptil}{{\Tilde{s}}^h} 
\newcommand{\asp}{s^\adv} 
\newcommand{\asptil}{{\Tilde{s}}^a} 
\newcommand{\spratio}{\phi} 
\newcommand{\dynratio}{1+\varepsilon} 
\newcommand{\replottime}{\rho} 
\newcommand{\sprof}[1]{\mathcal{S}(#1)}
\newcommand{\tent}{\Gamma}
\newcommand{\adv}{{\cal A}}
\newcommand{\la}{\hookleftarrow}
\newcommand{\lint}[1]{\left\lceil #1 \right\rceil} 
\newcommand{\eqdef}{\stackrel{\tiny \sf def}{=}}
\newcommand{\til}{\widetilde}
\newcommand{\ahad}[1]{}
\newcommand{\kcomm}[1]{}
\newcommand{\hla}{\hookleftarrow}
\newcommand\blfootnote[1]{%
  \begingroup
  \renewcommand\thefootnote{}\footnote{#1}%
  \addtocounter{footnote}{-1}%
  \endgroup
}
\title{On the (in)security of Proofs-of-Space based Longest-Chain Blockchains}
 \author{
 	Mirza Ahad Baig$^1$
     \and Krzysztof Pietrzak$^1$
	}
\institute{
 $^1$ISTA, Austria
}
\begin{document}

\maketitle

\begin{abstract}
The Nakamoto consensus protocol underlying the Bitcoin blockchain uses proof of work as a voting mechanism. Honest miners who contribute hashing power towards securing the chain try to extend the longest chain they are aware of. Despite its simplicity, Nakamoto consensus achieves meaningful security guarantees assuming that at any point in time, a majority of the hashing power is controlled by honest parties. This also holds under ``resource variability'', i.e., if the total hashing power varies greatly over time.

Proofs of space (PoSpace) have been suggested as a more sustainable replacement for proofs of work. Unfortunately, no construction of a ``longest-chain'' blockchain based on PoSpace, that is secure under dynamic availability, is known. In this work, we prove that without additional assumptions no such protocol exists. We exactly quantify this impossibility result by proving a bound on the length of the fork required for double spending as a function of the adversarial capabilities. This bound holds for any chain selection rule, and we also show a chain selection rule (albeit a very strange one) that almost matches this bound.  

Concretely, we consider a security game in which the honest parties at any point control $\phi>1$ times more space than the adversary. The adversary can change the honest space by a factor $1\pm \varepsilon$ with every block (dynamic availability), and ``replotting'' the space (which allows answering two challenges using the same space) takes as much time as $\rho$ blocks.

We prove that no matter what chain selection rule is used, in this game the adversary can create a fork of length $\phi^2\cdot \rho / \varepsilon$ that will be picked as the winner by the chain selection rule.

We also provide an upper bound that matches the lower bound up to a factor $\phi$. There exists a chain selection rule (albeit a very strange one) which in the above game requires forks of length at least $\phi\cdot \rho / \varepsilon$.

Our results show the necessity of additional assumptions to create a secure PoSpace based longest-chain blockchain. The Chia network in addition to PoSpace uses a verifiable delay function. 
Our bounds show that an additional primitive like that is necessary.



\end{abstract}

\blfootnote{This research was funded in whole or in part by the Austrian Science Fund (FWF) 10.55776/F85}

\section{Introduction}
Bitcoin was the first successful digital currency. What set it apart from previous attempts like Digicash~\cite{digicash} was the fact that it is permissionless. 
This means it is decentralized -- so no single entity can shut it down or censor transactions -- and moreover, everyone can participate in maintaining and securing the currency.

The key innovation in Bitcoin is the blockchain which realizes a ``decentralized ledger''. In the case of a digital currency, this ledger simply records all the transactions, but it can also hold richer data like smart contracts~\cite{ethereum}.

A blockchain is a hash-chain 
$b_0\hookleftarrow b_1 \hookleftarrow b_2 \ldots \hookleftarrow b_j$ where each $b_i$ is a data block that contains a ``payload'' (transactions, a time stamp, etc.), a hash $h_i=H(b_{i-1})$ of the previous block, and a ``proof of work'' (PoW) $\pi_i$.

The collision-resistance of the hash function $H$ ensures that a block $b_i$ commits all the previous blocks. The main novelty is the way proofs of work are used to make it computationally costly to add a block: To create a valid block $b_i$ one must find a value $\pi_i$ such that the hash of the previous block and $\pi_i$ is below some threshold
$$
0.H(b_{i-1},\pi_i) < 1/D 
$$
here we think of the hash as a binary string: if the difficulty $D$ is $2^k$, then the hash $H(b_{i-1},\pi_i)$ must start with at least $k$ $0$'s. Parties called miners compete to find PoWs to extend the latest block. They are incentivized by rewards (block rewards and transaction fees) to contribute computing power towards this task. Bitcoin is permissionless in the sense that everyone can be a miner and the protocol does not need to know who currently participates~\cite{lewispye2024permissionlessconsensus}. Bitcoin can be shown to be secure (in particular, it does not allow for double spending), assuming that a majority of the hashing power is controlled by honest parties who follow the protocol rules. The most important rule just states that a miner should always work towards extending the heaviest valid chain (typically, the heaviest chain is also the longest one, hence the name ``longest chain'') they are aware of. Blockchains following this general rule are called ``longest-chain blockchains'', the protocol itself is referred to as ``Nakamoto consensus''.

\paragraph{Alternative Proof Systems.} 
Nakamoto consensus uses computation as a resource so that a miner who holds an $\alpha$ fraction of the total resource will contribute an $\alpha$ fraction of all blocks in expectation, and thus get roughly an $\alpha$ fraction of the rewards. 

Using computation as a resource has several negative implications. The main one is the ecological impact: currently Bitcoin mining is burning roughly as much energy as the Netherlands. It is thus natural to look for a more ``sustainable'' resource that could replace hashing power in a longest-chain blockchain.

The most investigated alternative are \emph{proofs of stake} (PoStake), where the coins as recorded on the blockchain serve as a resource. More precisely, miners can stake their coins, which takes them temporarily out of circulation. 
They can then participate in the mining process, getting a fraction of the rewards which is proportional to the fraction of their stakes coins.

PoStake is extremely appealing as it is basically wasteless as it is not a ``physical'' resource, but it raises many technical questions and conceptual issues. One argument that is often raised is that PoStake is not really permissionless as the only way to participate in mining lies in acquiring coins from a limited supply in the first place.

In~\cite{DFKK2015} \emph{proofs of space} (PoSpace) were introduced. A PoSpace is a proof system where a prover convinces a verifier that it ``wastes'' disk space. The motivation for this notion was a replacement for proofs of work which is still a ``physical resource '', and thus does not share many of the shortcomings of PoStake, but is also much more sustainable than proofs of work. 

\paragraph{Proofs of Space.}A proof of space~\cite{DFKK2015} is an interactive protocol between a prover $P$ and a verifier $V$. The main protocol parameter is a value $N$ determining the disk space of an (honest) prover (a typical value would be $N=2^{43}$ bits, which corresponds to one TB). In an initialization phase, which is executed once, the honest prover initializes his disk space space of size $N$ with a file $S$, called a ``plot''. This phase should be very efficient for the verifier (or not involve the verifier at all~\cite{AbusalahACKPR17}), while the prover should run in time $\tilde O(N)$. This is basically optimal as they must run in time $N$ to just ``touch'' the entire disk space.

After the initialization phase, the prover can create valid proofs for random challenges very efficiently, in particular, only accessing a tiny portion of its local file $S$. The security property of a PoSpace states that any prover who instead of $S$ stores some data $S'$ of some size that is ``sufficiently'' smaller than $S$, will fail to ``efficiently''  create a proof for a random challenge with ``significant'' probability. 

We will not discuss what exactly ``sufficiently'' and ``significant'' means here. Let us mention that for the application to longest-chain blockchains it is sufficient that for any $0<\alpha< 1$, a prover storing $\alpha\cdot N$ bits will fail on a $1-\alpha$ fraction of the challenges.

Concerning the ``efficiently'' in the statement above, note that a malicious prover can always create a valid proof even when storing almost nothing by simply running the initialization procedure after getting the challenge to create the plot $S$, and then computing the proof using the honest algorithm. Thus the best we can hope for is that a malicious prover needs $\tilde\Omega(N)$ computational work (i.e., the cost of computing the plot) when only storing a sufficiently compressed plot $S'$. 

The observation above also implies that a prover with $N$ space can ``pretend'' to have $k\cdot N$ space by creating $k$ different plots sequentially using $k\cdot \tilde O(N)$ work. Note that when attacking a blockchain, one would need to do the replotting afresh for every block, as the challenge for a block is only known once the previous block is computed. Thus, creating a proof using such a \emph{replotting attack} is extremely expensive compared to creating proofs honestly, and this attack is presumably not an issue when blocks arrive sufficiently frequently.
\emph{The results of this paper show that this intuition is wrong.}

\paragraph{Longest-Chain Blockchains from Proofs of Space.}
To construct a longest-chain blockchain from PoSpace we can use Nakamoto consensus, but replace the PoW with PoSpace. There are various challenges one must address which we outline below.
\begin{description}
\item[Interactive Resource Initialization:]
In Bitcoin, a miner with some mining hardware can start participating in mining at any time. For PoSpace this is in general not the case as there's an initialization phase. The earliest PoSpace longest-chain proposal (which remained purely academic) is Spacemint~\cite{spacemint}, which uses the pebbling-based PoSpace from~\cite{DFKK2015}. This PoSpace has an interactive initialization phase after which the verifier holds a type of commitment to the plot created by the prover. In Spacemint the chain plays the role of the verifier, and the commitment must be uploaded by a miner to the chain as a special transaction before they can start mining. The function-inversion-based PoSpace from~\cite{AbusalahACKPR17} has a non-interactive initialization, i.e., the verifier is not involved at all, and thus it can be used like a PoW in Bitcoin. This PoSpace is used in the Chia network~\cite{chia} blockchain.

\item[Bock-Arrival Times:]
In Bitcoin, the arrival time of blocks can be controlled by setting the difficulty. Unlike PoW, PoSpace (also PoStake) are efficient proof systems, where once the resource (a plot or staked coins) is available, creating a proof is cheap and fast, so we need another mechanism. The easiest approach is to simply assume all parties have clocks and specify that blocks are supposed to arrive in specific time intervals, say once every minute. This is the approach taken by Spacemint~\cite{spacemint} or Ouroboros~\cite{ouroboros}, while in the Chia network~\cite{chia}  verifiable-delay functions (VDFs) are used to enforce some clock-time between the creation of blocks.
\item[Costless Simulation/Grinding:]
The key difference between PoW and ``efficient'' proof systems like PoSpace or PoStake is the fact that producing proofs for $k>1$ different challenges require $k$ times as much of the resource in PoW, but it makes hardly a difference for PoSpace or Postake, as producing a proof is extremely cheap compared to acquiring the resource in the first place. This ``costless simulation'' property creates various issues in the blockchain setting. 
One such issue is grinding attacks. Consider a setting where an adversary can influence the challenge, a typical example is a blockchain like Bitcoin where the challenge for the next block depends on the current block, and the miner that creates the current block can e.g. choose which transactions to add. Such an attacker can ``grind'' through many different blocks until they find one that gives a challenge they like (say because with this challenge they can also win the next block).

A canonical countermeasure against grinding first proposed in Spacemint~\cite{spacemint} is to ``split'' the chain in two. One chain only holds canonical values like proofs and is used for creating challenges, while another chain holds all the ``grindable'' values (transactions, time-stamps, etc.).
\item[Costless Simulation/Double-Dipping:]
Even once grinding is no longer an issue, with costless simulation an adversary can still cheaply try to extend many of the past blocks, this way growing a tree rather than a chain. This strategy has been proven to virtually increase the adversarial resource by a factor of $e=2.72$~\cite{chia}. An elegant countermeasure against this attack was proposed in \cite{Bagariaetal2022-proof-of-stake}, the basic idea is to only use the $k$th block for computing challenges, this ``correlated randomness'' technique decreases the advantage as $k$ increases.
\item[Costless Simulation/Bootstrapping:]
An adversary having some resource (space or stake) $N$ can immediately create a long chain. Typically one would make up the time-stamps for this chain so it looks like a legit chain that has been created over a long period. In context of PoStake this is a well-known problem, while Spacemint~\cite{spacemint} was the first instance this appeared in literature for PoSpace.
\item[Replotting:]
In PoSpace a prover who has space of size $N$ and gets a challenge $c$ can pretend to have much more space by re-initializing the same space $k$ times with different identifiers, this way creating $k$ proofs pretending to have $k\cdot N$ bits of space. 
As replotting is fairly expensive, in a context like Blockchains, where challenges arrive frequently, it seems impossible (or at least not rational) to continuously replot. In this paper we show that this intuition is flawed; replotting attacks, in combination with bootstrapping, are used in our lower bound showing no PoSpace longest-chain blockchain is secure under resource  variability. This was identified as a problem in~\cite{spacemint}. 

\end{description}

\subsection{Resource Variability}
In this work, we prove that no PoSpace based longest-chain blockchain can be secure. 
Resource variability means that the amount of the resource dedicated to mining changes over time. 

Bitcoin can be shown to be secure under resource variability as long as the honest parties hold more hashing power than a potential adversary at any point in time.

With PoStake the situation is more interesting. A PoStake based longest-chain protocol using the Bitcoin chain-selection rule where one picks the ``heaviest'' chain is not secure due to bootstrapping attacks.\footnote{More precisely, assume there's a point in time where a very high amount of coins is staked, say at the $i$th block the honest parties staked $c^h_i$ coins, while the adversary $\adv$ controls a $1/\spratio<1$ fraction of that, i.e., $c^a_i=c^h_i/\spratio$. At this point, $\adv$ bootstraps a private fork $b_i\hookleftarrow b_{i+1}  \ldots \hookleftarrow b_{i+T}$ of some length $T$, each block having weight $c^a_i$. If for the next $T$ blocks the amount staked by the honest parties is (on average) sufficiently smaller than $c^a_i$, the chain created by the honest parties will look lighter than the private fork of $\adv$ at block $i+T$, and at this point $\adv$ can release his private fork which will be adapted by all honest parties.
} On the positive side, the paper on Ouroboros Genesis~\cite{BGKRZ2018} shows that a completely different chain selection rule does imply security even for PoStake based chains (with some additional assumptions, like assuming honest parties delete old keys). Their chain selection rule basically says that given two chains $A$ and $B$ one only looks at the weight of a short subchain starting at the point where $A$ and $B$ fork, and picks the chain whose subchain is heavier. 

For PoSpace based chains the genesis chain selection rule is \emph{not} secure, in fact, unlike the heaviest chain rule, the genesis rule is insecure even without resource variability (i.e. when we assume the space of the honest and adversarial parties is static). There is a simple attack exploiting bootstrapping and \emph{replotting}, which we'll sketch below.
Informally, the reason this attack does not apply in the PoStake or PoW settings is that there's no analog of replotting in PoStake, while in the PoW setting, we do not have bootstrapping.

\subsection{Modelling a Longest-Chain PoSpace Blockchain}
To model a PoSpace based longest-chain blockchain we will make a few idealizing assumptions. As our main result is a lower bound, this only makes our result stronger, concretely
\begin{description}
\item[Resource:] We assume the chain grows by exactly one fresh block per time unit, and each block exactly reflects the amount of space that was used. In reality, a blockchain like Chia or Bitcoin (in the PoW setting) only approximately reflects this amount. One can get a very good approximation of the space used by looking at a sufficiently long subsequence (this idea is used when recalibrating the difficulty). Alternatively one could consider a blockchain design where each block contains the best $k$ proofs for some $k>1$. The larger $k$, the lower the variance and thus the better the approximation. With ``block'' we do not necessarily model a single block, but rather the appearance of a fresh challenge, and this challenge can be used for multiple blocks (e.g. in Chia we have a fresh challenge every 10 minutes, but as Chia uses the correlated randomness technique to prevent double dipping, this challenge is used for up to $64$ actual blocks).
\item[Attacks:]While we model bootstrapping and replotting, we assume there is no grinding or double dipping. This is justified as we have techniques to mostly prevent griding and double dipping, while there's no simple way to prevent replotting, and to prevent bootstrapping we need additional primitives like VDFs.
\item[Resource Variability:] The adversary can control the change in resource, but is restricted to change it only within some $\dynratio$ factor with each block, where $\varepsilon>0$. The quantitative lower bound and the matching upper bound depend crucially on this parameter.
\end{description}

\subsection{Approach for Lower Bound}
To prove our lower bound we let an adversary specify two possible forks, $A$ and $B$ of a chain by specifying how the space of the honest parties changes over time in both cases. 
Now assume we show that (for given ranges of parameters) by exploiting bootstrapping and replotting it is possible to create such forks where $B$ can be ``faked'' using a fraction (say half) of the space the honest parties had in $A$, and vice versa, i.e., $A$ can be faked using half the space of $B$.

This means that an adversary in a hypothetical world where $A$ is the honest chain could fake chain $B$ and vice versa, thus, no matter which chain selection rule is used, in one of the two worlds the adversary's fork will succeed (say the chain rule prefers $A$ over $B$, then in a ``world'' where $B$ is the correct profile, the adversary can create a fork $A$ which will win over $B$).

For our upper bound, we show that a particular chain selection rule is secure almost up to the parameters for which our lower bound applies.

\begin{figure}[h]
    \centering
     \includegraphics[width=0.5\textwidth]{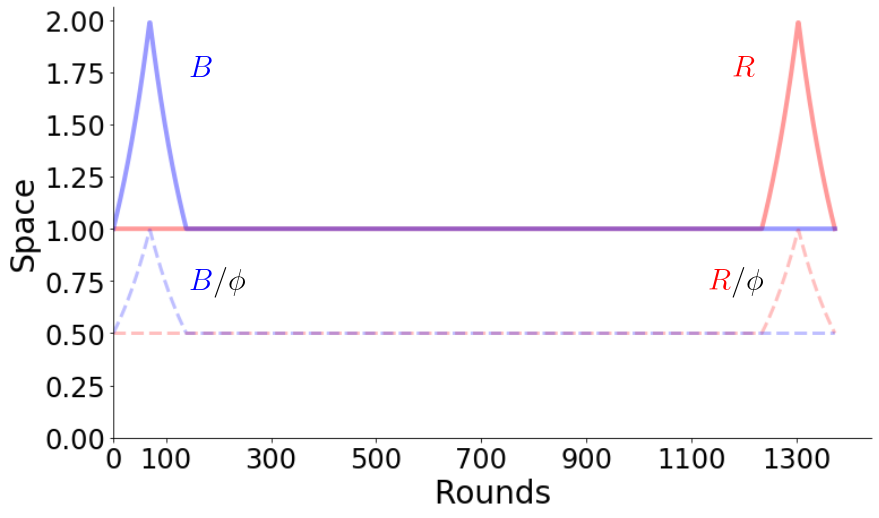}
    \caption{Two profiles as used in our lower bound for $\varepsilon=0.01,\spratio=2$ and $\rho=4$. 
    }
    \label{fig:profiles}
\end{figure}

\subsection{Proof Sketch for Lower Bound}

To prove our lower bound, we specify two profiles ${\color{red}R}$ and ${\color{blue}B}$ reflecting the space honest users have, and then outline a strategy for how (by using bootstrapping and replotting) one can create profile ${\color{red}R}$ using a profile ${\color{blue}B}/\spratio$, where only a $1/\spratio$ fraction of the space in ${\color{blue}B}$ is available (i.e., the space the adversary has in world $A$) and vice versa. Moreover, in the profiles ${\color{blue}B}$ and ${\color{red}R}$ the space is only allowed to change by a $\dynratio$ per block. We will sketch the main idea using the profiles in~\cref{fig:profiles}. The profile ${\color{red}R}$ is plotted by the light red solid line and profile ${\color{blue}B}$ is plotted by the light blue solid line. The solid lines show how much space honest parties control and the dashed lines of the respective colors show how much adversary controls for the respective solid space profile. The honest parties start at space $1$, then we let the profile ${\color{blue}B}$ increase to $\spratio$ as fast as allowed (i.e., by a factor $\dynratio$ per step), then we go back to $1$ as fast as possible, and then there is a long flat part (the length will depend on our parameters). Profile ${\color{red}R}$ is the mirrored version of ${\color{blue}B}$.

\label{sec:proofSketch}
Let us now sketch how the ${\color{red}R}$ (shown by light red solid line in~\cref{fig:profiles}) profile can be faked using a $1/\spratio$ fraction of ${\color{blue}B}$ (shown by solid light blue in~\cref{fig:profiles}). This is illustrated in the figures in~\cref{fig:2a,fig:2b}. The adversary does nothing until block $70$ when its space ${\color{blue}B}/\spratio$ reaches its maximum.

At this point ${\color{blue}B}/\spratio\ge 1$ and the adversary can bootstrap the flat part of ${\color{red}R}$ for $1233$ steps as shown in the top left graph of~\cref{fig:create}. At this point, the adversary only needs to fake the ``tent'' in the last $140$ steps of the ${\color{red}{R}}$ profile. 
For this, it uses replotting.  



\begin{figure}[h]

\centering

\begin{subfigure}[b]{0.5\textwidth}
    \centering
    \includegraphics[width=\textwidth]{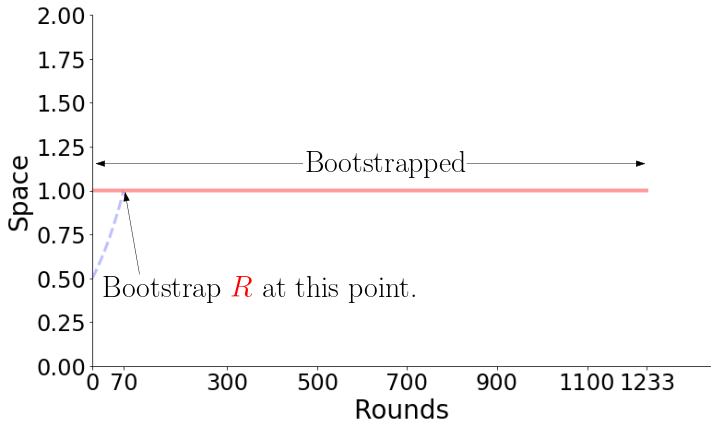}
    \caption{}
    \label{fig:2a}
\end{subfigure}\hfill
\begin{subfigure}[b]{0.5\textwidth}
    \centering
    \includegraphics[width=\textwidth]{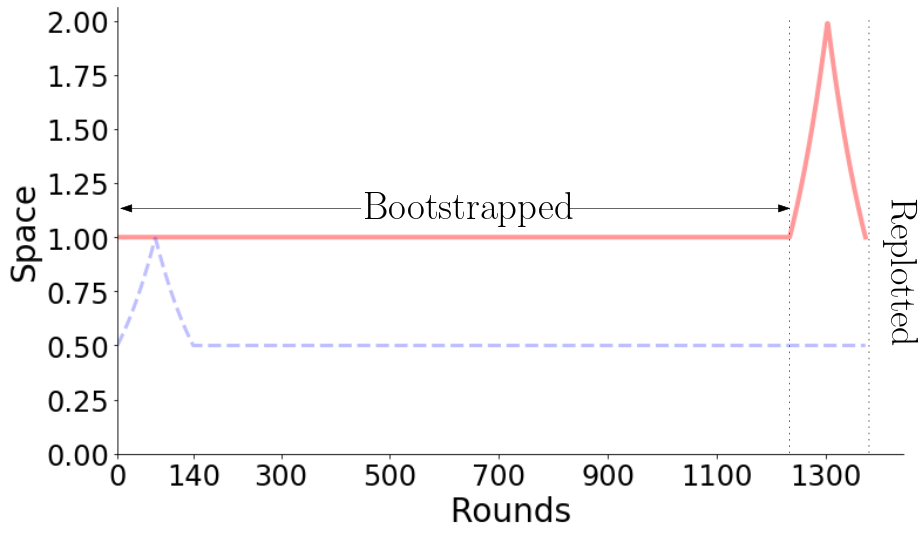}
    \caption{}
    \label{fig:2b}
\end{subfigure}\hfill
\begin{subfigure}[b]{0.5\textwidth}
    \centering
    \includegraphics[width=\textwidth]{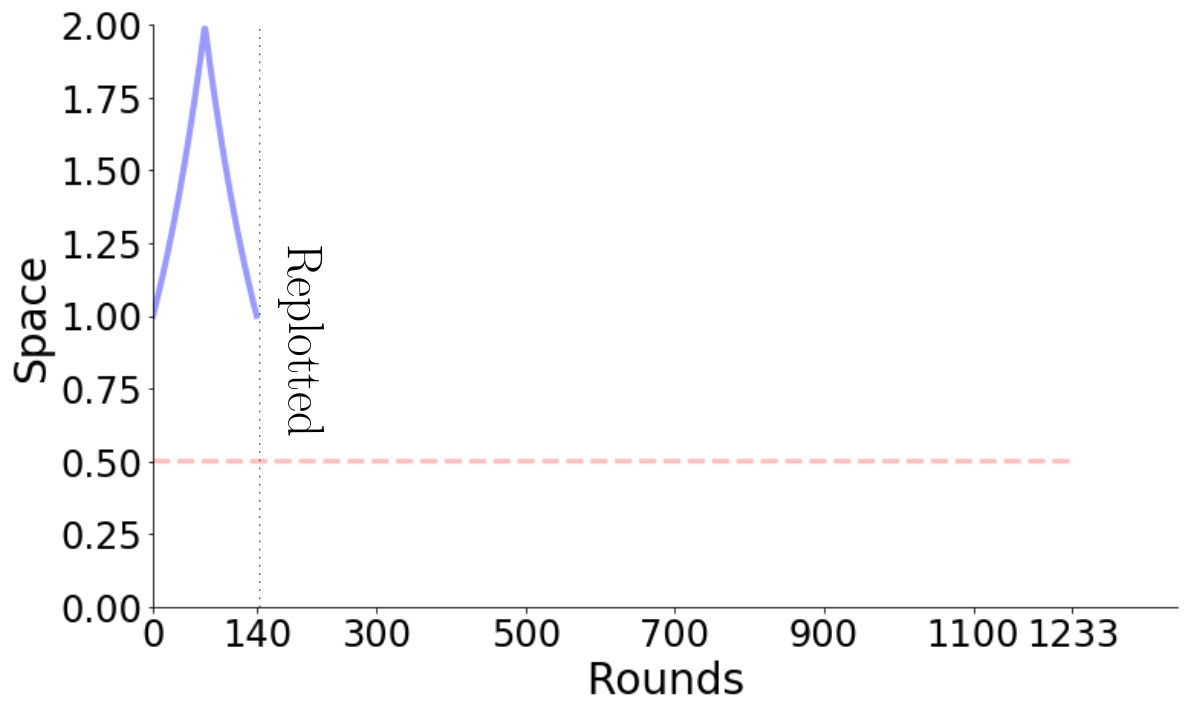}
    \caption{}
    \label{fig:2c}
\end{subfigure}\hfill
\begin{subfigure}[b]{0.5\textwidth}
    \centering
    \includegraphics[width=\textwidth]{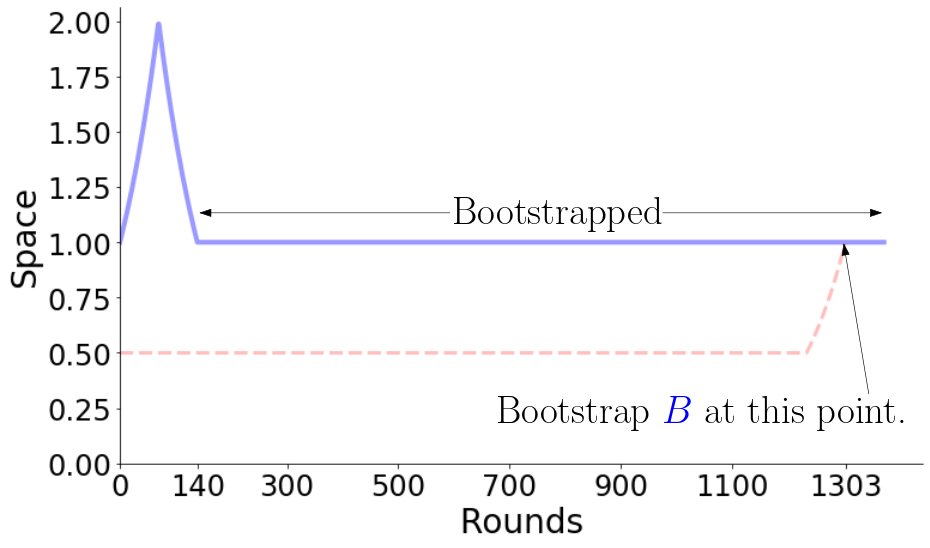}
    \caption{}
    \label{fig:2d}
\end{subfigure}
\caption{The figs (a), (b) outline how the red profile from Fig~\ref{fig:profiles} is faked using the blue profile for parameters $\varepsilon=0.01,\spratio=2$ and $\rho=4$. In the first step, we use bootstrapping to create the flat part of the solid red profile (once the solid blue profile reaches the ``peak'', a $1/\spratio$ fraction of the solid blue profile is as high as the flat part of the solid red profile, and thus it can be bootstrapped). Then we use replotting to create the ``tent'' of the solid red profile (as $\rho=4$, i.e., replotting takes four steps, it is sufficient that the remaining area below the blue profile is as large as the area under the red ``tent''. 
    The two figures (b), and (c) illustrate how the blue profile is faked using the red one.}
    \label{fig:create}
\end{figure}

\section{Model}
\subsection{Basic Notation for Chains}
In the abstract model of the chain, we assume that time progresses in discrete steps $t_0,t_1,t_2,\ldots$. 
During the $i$th step $(t_{i-1},t_i]$ the honest parties add a (super)block $b_i$. 
We'll always indicate the position of a block in the chain with a subscript like here.

The space available to the honest parties and the adversary at time $i$ is denoted with $\hsp_i$ and $\asp_i$, respectively. 
We assume that each block $b_i$ perfectly reflects the amount of space that was used to create it, which is denoted by $s(b_i)$ (for the genesis block $b_0$ we set $s(b_0)=1$).

We denote with $b_i\la b_{i+1}$ that $b_{i+1}$ is extending (i.e., contains a hash of) block $b_i$. 
For a chain $$\C_0^\ell = b_0 \la \ldots \la b_\ell,  $$ we denote with $\sprof{\C_0^\ell}$ the \emph{space-profile} of chain $\C_0^\ell$ which is defined as $(s(b_i))_{i=0}^{\ell}$, the sequence of space used to create each block $b_i$. 
We'll often use the notation $\C_i^j= b_i \la \ldots \la b_j$ to denote subchains.

\begin{mdframed}[linewidth=1pt]
\label{glossary}
    \textbf{Glossary:}
    \begin{itemize}
        \item[] $\hsp_i\in \R$: The space available to the honest parties at step $i$.
        \item[] $\asp_i\in \R$: The space available to the adversary $\adv$ at step $i$.
        \item[] $\spratio>1$: The honest/adversarial space ratio $\forall i: \hsp_i/\asp_i=\spratio$
        \item[] $\varepsilon > 0$: The rate of change of the space $\forall i: \hsp_i\cdot\frac{1}{\dynratio}\le  \hsp_{i+1} \le \hsp_i\cdot(\dynratio)$
        \item[] $\rho\in\mathbb{N}^+_{\ge 2}$: Number of steps required for replotting
    \end{itemize}
\end{mdframed}

\subsection{Chain Selection Rules}
\label{sec:csr}
A chain selection rule takes as input two chains of the same length and outputs  bit indicating the ``winner''
$$
\CS\colon{\cal C}\times{\cal C}\rightarrow \{0,1\}
$$
Consider two chains $\C_0^\ell,\tilde\C_0^\ell$ 
$$
\C_0^\ell = b_0\la \ldots\la b_\ell\qquad 
\tilde\C_0^\ell=\tilde b_0\la \ldots \la \tilde b_\ell
$$
The Bitcoin chain selection rule simply picks the chain of higher weight, adapted to our space setting this ``highest weight rule'' $\CS_{w}$ is
$$
\CS_{w}(\C_0^\ell,\tilde\C_0^{\ell'})=0 \iff \sum_{i=0}^\ell s(b_i) > 
\sum_{i=0}^{\ell'} s(\tilde b_i) 
$$ Note that the two chains may not always be of the same length but for our lower bound result we can assume without loss of generality that they are equal. 

Most proposed and deployed longest-chain blockchains use a highest weight rule like this, in some cases augmented with checkpointing or finality gadgets that prevent miners from replacing their current chain with another chain that forks too far in the past even if it has a higher weight. 
An interesting exception is the rule suggested in Ouroboros Genesis~\cite{BGKRZ2018} which, for some parameter $k\in\mathbb{N}$, chooses the winning chain based only on the weight of the $k$ blocks following the forking point of the two chains. We'll denote the forking point (i.e., the index of the first blocks that differs) by $\lambda$ as it looks like a forking chain, let 
$$\lambda\eqdef \min\{i\ :\ b_i\neq \tilde b_i\}$$
The genesis chain selection rule, adapted to our setting, can now be defined as
$$
\CS^k_{\mathrm{genesis}}
(\C_0^j,\tilde\C_0^{j})=0 \iff \sum_{i=\lambda+1}^{\min(j,\lambda+k)} s(b_i) > 
\sum_{i=\lambda+1}^{\min(j,\lambda+k)} s(\tilde b_i) 
$$
The Ouroboros genesis rule was introduced as a proof-of-stake based longest-chain blockchain that is secure under resource variability. In~\cref{lem:genesis} we observe that for a proof-of-space based chain, this rule is not secure, and our impossibility result from~\cref{thm:main} shows that in fact, no secure rule exists.

\subsection{Adversarial Options}
We consider an experiment where honest farmers create a chain (the ``honest chain'') following the rules, while an adversary tries to create a private fork that can at some point be released and will be chosen as the winner over the honest chain by the chain selection rule.

Before specifying the game let us describe the option the adversary has in this game. Concerning the honest chain, if the adversary doesn't contribute at all, the honest parties at the end of 
the $j$th step (i.e., time $t_j$) will have created and agreed on a chain
$$
b_0\la b_1\la \ldots b_j\textrm{ where }\forall i\in[j]\ :\ s(b_i)=\hsp_i
$$
The adversary could contribute to the honest chain, which would create a chain where
$$
b_0\la b_1\la \ldots b_j\textrm{ where }\forall i\in[j]\ :\ \hsp_i\le s(b_i)\le \hsp_i+\asp_i
$$
Intuitively, for the chain selection rule, there shouldn't be any advantage for an adversary to contribute to the honest mining as it should only make the honest chain better, and this will be the case in our attack proving the lower bound. 

While the honest parties will always create the $i$th block in the chain at time $i$, the adversary $\adv$ who creates his private fork must not adhere to this, his only constraint is that his fork must have the same length as the honest chain when it is released. 

In a PoSpace based chain, there are two ways in which $\adv$ can exploit this, bootstrapping and replotting, outlined below.

\paragraph{Bootstrapping.}If at time $i$ $\adv$ knows of a (prefix of a) chain $\C_0^j=b_0\la\ldots\la b_j$, they can extend it immediately to a longer chain. The only constraint is that the space profile of the new blocks is at most the space available to $\adv$, i.e., the new chain satisfies
$$\C_0^j\la b_{j+1}\la\ldots\la b_k\textrm{ where }\forall i>j\ :\ s(b_i)\le \asp_i
$$
\paragraph{Replotting.}
The replotting parameter $\replottime\in\mathbb{N}^+$ specifies how many time steps it takes $\adv$ to replot its space. By replotting the space 
$k$ times -- which requires $k\cdot \replottime$ steps -- they can create a superblock that looks as if they had $k+1$ times the space they actually hold. Formally, using replotting, at time $i$, they can start extending a chain $\C_0^j$ with an extra block $\C_0^j\la b_{j+1}$ where $s(b_{j+1})\le k\cdot \asp_i$, and this will be done by time $i+k\cdot \replottime$.\footnote{In an actual chain, the parameter $\rho=T_{replot}/T_{block}$ is defined by $T_{replot}$, the clock time required to replot, divided by $T_{block}$, the clock time in-between challenges (which is simply the block arrival time if the challenge for a block depends on the previous block like in Bitcoin). In practice, $T_{block}$ should be large enough for a message to spread across the network, which is a few seconds. How large $T_{replot}$ is, depends on many things, most importantly, on the type of PoSpace used. In the PoSpace based on function inversion~\cite{AbusalahACKPR17}  initialization is parallelizable, and thus $T_{replot}$ can be in the order of seconds if the attacker has enough compute power (in particular, GPUs). As a consequence, $\rho$ can only be assumed to be a small constant. In pebbling-based PoSpace~\cite{DFKK2015}, initialization is inherently sequential, and  $T_{replot}$ (and thus $\rho$) is much larger.}

\subsection{The Forking Game}
We now define the game in which an adversary $\adv$ forks the honest chain with the goal of fooling the chain selection rule $\CS$ to accept their fork as the winner.

\begin{remark}[Probabilitsic vs. Deterministic]
When analyzing actual blockchains there's always some probabilistic argument, e.g., in Bitcoin the probability that an adversary controlling $X\%$ (for $X<50$) of the hashing power can double spend decreases exponentially with the confirmation time, but it is technically never $0$. Our ``game'' on the other hand is completely deterministic (for given parameters and fork length $\adv$ can win with probability $0$ or $1$) because we assume that each block perfectly reflects the amount of the resource (i.e., space) that was available to create it. While our analysis can be adapted to a probabilistic setting, we don't do so as it does not lead to any more interesting insight. 
\end{remark}
The game is parameterized by $\varepsilon > 1$, controlling how fast the amount of honest space changes; the changes are controlled by the adversary but allowed to change only by a factor $(\dynratio)$ per step. 
\footnote{Assuming that an adversary can precisely control the change of honest space is a strong assumption. But for our lower bound (i.e., an attack for any chain selection rule), arguably natural space profiles suffice. In particular, our attack works for any profile where the honest space stays below some bound $s$ for a longer period of time, with the exception of a ``peak'' of height $\phi \cdot s$ in the middle (the more ``narrow'' this peak is, the shorter the overall period where the profile is below $s$ can be). To break particular chain selection rules, even less demanding profiles are enough, e.g. for the rule used in Spacemint, it's sufficient that the profile at some point in time starts to decrease sufficiently much.}
The parameter $\spratio>1$ controls the amount of honest vs. adversarial space while $\replottime\in\mathbb{N}^+$ is the number of steps required for replotting.

The $(\spratio,\varepsilon,\replottime,\CS)$-game is defined as follows:
\begin{enumerate}
    \item
    \begin{itemize}
        \item The round counter is set to $i:=0$ and the ``replotting lock'' $lock:=0$.
        \item The honest and adversarial chains are initialized 
    with the genesis block $\C_0^0 = \tilde \C_0^0=b_0$ (where w.l.o.g. $s(b_0)=1 $).
    \item $\adv$ chooses its initial space $\asp_0$ and we set the honest space to $\hsp_0:=\asp_0\cdot \spratio$.
    \end{itemize}
    \item In each round
    \begin{itemize}
        \item Increase the round counter $i:=i+1$.
        \item  
        $\adv$ chooses the space adjustment 
        $\gamma_i$ in the range $\frac{1}{(\dynratio)}\le \gamma_i\le  (\dynratio)$
        and the space is set to 
        $$
        \asp_i:=\asp_{i-1}\cdot \gamma_i\quad,\quad \hsp_i:=\asp_i\cdot \phi
        $$
        \item The honest chain is extended $$\C_0^i:=\C_0^{i-1}\la b_i$$ with a block with space profile $s(b_i)=\hsp_i$    
        \item
        If $lock>0$ (i.e., replotting is going on) set $lock:=lock-1$, otherwise $\adv$ can extend its current chain $\tilde \C_0^j$ in two ways
        \begin{description}
        \item[bootstrap: ]Extend $\tilde \C_0^j$ to 
        $$\tilde \C_0^{j'}=\C_0^j\la \tilde b_{j+1}\la\ldots\la \tilde b_{j'}$$ where 
        $\forall t,j+1\le t\le j' \ :\ s(\tilde b_t)\le \asp_i$.
        \item[replot: ]$\adv$ can call a replot request by which the last block 
        $\tilde b_j$ is replaced with a block $\tilde b'_j$ with space profile
        $$
        s(\tilde b'_j)\le s(\tilde b_j)+\asp_i  
        $$
        Set the replotting lock to $lock:=\replottime-1$.
        \end{description}
        \item if $lock=0$ (i.e., no replotting going on) and the length $j$ of the adversarial chain $\tilde B_0^j$ is at least $j\ge i$, then $\adv$ can stop the game. \\

        \end{itemize}

    If the chain selection rule prefers the current ($i$ block prefix of the) adversarial chain to the honest one, i.e.,
        $$
        \CS(\C_0^i,\tilde \C_0^i)=1
        $$
        then we say the game is $\ell$-winning for the $\adv$, where $\ell$ denotes the length of the fork (i.e., length of chain minus the length of the common prefix)
        $$\ell = i- \max\{k\ :\ b_k=\tilde b_k\}.$$
\end{enumerate}
\subsection{Forking Existing Rules}
In this section, we'll observe that the forking game can be won against the highest weight $\CS_{w}$ and the genesis $\CS^k_{\mathrm{genesis}}$ chain selection rules that we discussed in~\cref{sec:csr}. To break $\CS_w$ one only requires bootstrapping (but no replotting) and resource variability, i.e. a $\varepsilon > 0$. For $\CS_{\mathrm{genesis}}^k$ the $\spratio$ can be $1$.
\begin{lemma}
    The $(\spratio,\varepsilon,\replottime,\CS_{w})$-game can be $\ell$-won for $\ell=\lint{\frac{\spratio}{\varepsilon}}$
  \end{lemma}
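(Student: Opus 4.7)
The plan is to exhibit an explicit adversarial strategy that uses only \textbf{bootstrap} (so the parameter $\replottime$ plays no role) and exploits the $(\dynratio)$-rate of change in honest space to make the entire honest chain much lighter than a uniformly heavy bootstrapped fork of length $\ell$.

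First I would fix the space profile. The adversary picks an initial space $\asp_0$ and sets $\gamma_1 = 1$, so that $\asp_1 = \asp_0$ and $\hsp_1 = \spratio\asp_0$ in round~$1$. In that same first round, before any lock is set, the adversary invokes \textbf{bootstrap} on its current chain $\tilde\C_0^0 = b_0$ and extends it in a single step with $\ell := \lint{\spratio/\varepsilon}$ fresh blocks $\tilde b_1,\ldots,\tilde b_\ell$, all of weight $\asp_1 = \asp_0$. All $\ell$ blocks may legally share this maximum weight because the bootstrap rule only requires $s(\tilde b_t) \le \asp_i$ in the round $i$ when the call is issued. From round~$2$ onward the adversary sets $\gamma_i = 1/(\dynratio)$, so that $\hsp_i = \spratio\asp_0\cdot(\dynratio)^{-(i-1)}$ decays at the maximum allowed rate, and takes no further action. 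At round $i = \ell$ the adversarial chain already has length $\ell$, so the adversary stops. Since $\tilde b_i \ne b_i$ for every $i \ge 1$, the fork point is the genesis $b_0$ and the fork length is exactly $\ell$.

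To verify that $\CS_w$ picks the adversarial chain I would compare the total weights. The adversarial side contributes $\sum_{i=1}^{\ell} s(\tilde b_i) = \ell\asp_0$, while by the geometric decay of $\hsp_i$,
$$
\sum_{i=1}^{\ell} \hsp_i \;=\; \spratio\asp_0\sum_{j=0}^{\ell-1}(\dynratio)^{-j} \;=\; \spratio\asp_0\cdot\frac{(1+\varepsilon)\bigl(1-(\dynratio)^{-\ell}\bigr)}{\varepsilon}.
$$
The winning condition for $\CS_w$ therefore reduces to $\ell\varepsilon \ge \spratio(1+\varepsilon)\bigl(1-(\dynratio)^{-\ell}\bigr)$. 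Plugging $\ell = \lint{\spratio/\varepsilon}$ yields $\ell\varepsilon \ge \spratio$ on the left, while the right-hand side equals $\spratio(1+\varepsilon) - \spratio(\dynratio)^{1-\ell}$; the gap $\spratio\varepsilon$ between $\spratio$ and $\spratio(1+\varepsilon)$ is absorbed by the correction term $\spratio(\dynratio)^{1-\ell}$.

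The main obstacle is exactly this last numerical check. The crude upper bound that discards the correction term only yields $\ell \ge \lint{\spratio(1+\varepsilon)/\varepsilon}$, which is strictly larger than $\lint{\spratio/\varepsilon}$; to match the claimed bound one has to retain the $-(\dynratio)^{-\ell}$ term in the geometric sum and use a concrete estimate such as $(\dynratio)^{-\ell} \ge e^{-\spratio}$ for $\ell \ge \spratio/\varepsilon$, which is enough to close the gap. Everything else — the legality of a single bootstrap in round~$1$, the availability of the stop condition at round $\ell$, and the identification of the fork point as $b_0$ — is routine bookkeeping.
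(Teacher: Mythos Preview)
Your high-level strategy --- bootstrap the entire fork in round~$1$ and then let the honest space decay at the maximal rate --- is exactly the paper's. The difficulty you flag at the end is real, but your proposed resolution is wrong: after cancelling $\asp_0$ you must verify $\ell\varepsilon \ge \spratio(\dynratio)\bigl(1-(\dynratio)^{-\ell}\bigr)$, i.e.\ $(\dynratio)^{1-\ell}\ge\varepsilon$, and the estimate $(\dynratio)^{-\ell}\gtrsim e^{-\spratio}$ does \emph{not} close that gap in general. Take $\spratio=2$, $\varepsilon=0.5$, so $\ell=\lceil\spratio/\varepsilon\rceil=4$: the honest weight (excluding the common genesis) is $\spratio\,\asp_0\sum_{k=0}^{3}(1.5)^{-k}=\tfrac{130}{27}\,\asp_0\approx 4.81\,\asp_0$, while the adversarial weight is only $4\,\asp_0$, so $\CS_w$ selects the honest chain and your attack loses. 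For $\spratio=10$, $\varepsilon=0.1$ the tail $(\dynratio)^{1-\ell}\approx 8\times 10^{-5}$ is four orders of magnitude below $\varepsilon$.

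The loss comes from your choice $\gamma_1=1$: it leaves $\hsp_1=\spratio\,\asp_0$ undamped, so the honest geometric sum runs over exponents $0,1,\ldots,\ell-1$ and its infinite-sum bound is $(\dynratio)/\varepsilon$ rather than $1/\varepsilon$. The paper instead sets $\gamma_i=1/(\dynratio)$ already at $i=1$, so that $\hsp_t=(\dynratio)^{-t}$ and $\sum_{t\ge 1}\hsp_t<1/\varepsilon$; bootstrapping with blocks of weight $\hsp_0/\spratio=1/\spratio$ then gives adversarial weight $1+j/\spratio$, and $j\ge\lceil\spratio/\varepsilon\rceil$ suffices directly, with no tail estimate needed. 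That one-step shift --- whether the first post-genesis honest block weighs $\spratio\,\asp_0$ or $\spratio\,\asp_0/(\dynratio)$ --- is precisely the factor $(\dynratio)$ you cannot recover from the correction term.
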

\begin{proof}
    To win $(\spratio,\varepsilon,\replottime,\CS_{w})$-game, $\adv$ simply bootstraps a long chain and then reduces the amount of space dynamically in order to make the honest chain have weight less than the adversarial chain. $\adv$ never contributes to the honest chain. 
    
    Concretely, in each step $i > 0$ the adversary decreases the space by the maximum allowed amount $\hsp_{i+1} = \frac{1}{(\dynratio)}\cdot \hsp_{i}$. At $i=1$ $\adv$ bootstraps
    \[
        \tilde\C_0^{j} = \C_0^0 \la \tilde b_{1}\ \la \tilde b_{2}\ \la \cdots \la \tilde b_{j}\ 
    \] where $ s( \tilde b_{t}\ ) = \hsp_0/\spratio= 1/\spratio$ for all $t \in [1, j]$. After this adversary simply lets $\C_0$ catch up. The game ends on round $j$. This gives,
     $$\textit{Weight of } \C_0^j = \sum_{t=0}^{j} \left( \frac{1}{(\dynratio)} \right )^{t} = \frac{1 - \frac{1}{(\dynratio)^{j+1}}}{1 - \frac{1}{(\dynratio)}} < \frac{1}{1-\frac{1}{(\dynratio)}}$$
     while $$\textit{Weight of } \tilde \C_0^j\ = 1 + \frac{j}{\spratio}$$ 

     Thus when $j\geq \lint{\frac{1}{(\dynratio)- 1}\cdot \spratio}$ the weight of the adversarial chain is higher than the weight of the honest chain. Hence $\CS_{w}(\C_0^j, \tilde \C_0^j\ ) = 1$.\qed
\end{proof}

\begin{lemma}
\label{lem:genesis}
    The $(\spratio, \varepsilon, \replottime, \CS_{\mathrm{genesis}}^k)$-game can be $\ell$-won for $\ell = \lint{\spratio} \cdot k \cdot \replottime$ 
\end{lemma}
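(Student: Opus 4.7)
The plan is to give an explicit strategy for $\adv$ that exploits replotting alone (no resource variability is needed, so $\varepsilon$ plays no role). I would fix a constant profile $\asp_i := 1$ and $\hsp_i := \spratio$ for every round $i$, and let $\adv$ never contribute to the honest chain, so that each honest block $b_i$ has weight exactly $\spratio$. This matches the lemma's hypothesis that $\spratio$ can be as small as $1$.

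The adversary forks immediately after the genesis block (so $\lambda = 0$). In round $1$, $\adv$ bootstraps $k$ blocks $\tilde b_1 \la \cdots \la \tilde b_k$, each of weight $\asp = 1$. In each subsequent round in which the replot lock is clear, $\adv$ calls \emph{replot} on the current last block $\tilde b_k$; each replot adds $\asp = 1$ to its weight and triggers a $\replottime - 1$ round lock, so a sequence of $r$ consecutive replots consumes exactly $r \cdot \replottime$ rounds. Taking $r := \lint{\spratio} \cdot k$, after all the replots $\tilde b_k$ has weight $1 + r$, so
$$
\sum_{i=1}^k s(\tilde b_i) \;=\; (k-1) + (1+r) \;=\; k + r \;\ge\; k(1 + \spratio) \;>\; k\spratio \;=\; \sum_{i=1}^k s(b_i).
$$
Once the final replot lock clears, $\adv$ bootstraps enough additional weight-$1$ blocks in a single round to match the current length of the honest chain, and then stops. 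Since the chain lengths are equal and the first $k$ post-fork blocks of $\tilde \C$ strictly dominate the honest ones, $\CS_{\mathrm{genesis}}^k(\C,\tilde \C)=1$, so the game is won with fork length bounded by the $r \cdot \replottime = \lint{\spratio} \cdot k \cdot \replottime$ rounds spent on replotting (a small additive constant for the initial and catch-up bootstrap rounds is absorbed into the stated bound).

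The part that requires the most care in the write-up is the bookkeeping of the replot lock: one must verify that after a replot in round $R$ the lock is set to $\replottime - 1$ and is decremented to $0$ exactly $\replottime - 1$ rounds later, so the next replot can be issued in round $R + \replottime$, and hence $r$ successive replots cost $r \cdot \replottime$ rounds in total. One also has to check that a single bootstrap can append arbitrarily many weight-$\asp$ blocks in one round, so the final length-matching step does not inflate the fork. Apart from this timing analysis, the proof is a direct simulation of the game, and the winning condition follows immediately from the weight inequality above since $\CS_{\mathrm{genesis}}^k$ ignores everything outside the first $k$ post-fork blocks.
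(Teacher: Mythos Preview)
Your proposal is correct and follows the same high-level idea as the paper (keep the space profile constant, ignore resource variability, and use replotting alone to make the first $k$ post-fork adversarial blocks outweigh the honest ones), but the tactics differ. The paper builds the $k$ adversarial blocks one at a time, replotting each block $\lceil\spratio\rceil-1$ times so that every $s(\tilde b_i)\ge 1$ individually; you instead bootstrap all $k$ blocks in one shot and pile \emph{all} $r=\lceil\spratio\rceil k$ replots onto $\tilde b_k$. Both work because $\CS_{\mathrm{genesis}}^k$ only compares the \emph{sum} over the first $k$ post-fork blocks, and both use the same total of $\lceil\spratio\rceil k$ replots and hence $\lceil\spratio\rceil k\rho$ rounds. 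Your version has slightly cleaner bookkeeping (only one block is ever replotted), while the paper's version produces an adversarial profile that looks more ``uniform''. A minor point: your hand-wave that the extra one or two rounds for the initial bootstrap and the final catch-up ``are absorbed into the stated bound'' is not literally justified by the lemma's exact equality $\ell=\lceil\spratio\rceil k\rho$, but the paper's own proof has the same off-by-one slack, so this is not a substantive gap. (Your normalisation $\asp_i=1,\hsp_i=\spratio$ versus the paper's $\asp_i=1/\spratio,\hsp_i=1$ is of course immaterial.)
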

\begin{proof}
    To win $(\spratio, \varepsilon, \replottime, \CS_{\mathrm{genesis}}^k)$-game, $\adv$ simply replots many times to get an adversarial chain which in its $k$ blocks after the fork has enough weight to be larger than the weight of the honest chain in the corresponding $k$ blocks. 
    
    Concretely, the game runs till $l = \lint{\spratio}\cdot k \cdot \replottime$. Throughout the game, $\adv$ does not use the resource variability; $\hsp_i, \asp_i$ remain constant at $1$, $\frac{1}{\spratio}$ respectively. Further, $\adv$ doesn't contribute to the honest chain. This produces $$ \C_0^j = b_0 \la b_1 \la \cdots \la b_l$$ such that $s(b_i) = 1 \forall i \in [0,l]$.  

    $\adv$ does the following:
    \begin{enumerate}
        \item On round $i=1$, it forks the chain to create $$\til \C^1_0 \ = b_0 \la \til b_1 $$ where 
        $s(\til b_1) = \frac{1}{(\dynratio)}$. Set $j:=1$
        \item Starting with round $i=1$ and ending on round $l$, it does the following steps:
        \begin{enumerate}
            \item If $(i-1) \mod \lint{\spratio} \cdot \replottime = 0$ and $i>1$, put $j := j+1$ and add a new block $\til b_j\ $, with $s(\til b_j) = \frac{1}{\spratio}$ to the adversarial chain $$ \til \C_0^j := b_0 \la \til b_1\ \la \cdots \la \til b_{j-1}\ \la \til b_j. $$ Then $\adv$ puts $lock := \replottime -1$ and starts replotting on $\til b_j$. 

            \item For next $\lint{\spratio}\cdot\replottime - 1$ rounds $\adv$ repeats replotting on $\til b_j$ to increase its space to $\lint{\spratio}/\spratio \ge 1$. Go back to step $(a)$.
        \end{enumerate}

        \item In the last round $\adv$ bootstraps the chain to create additional $l - k$ blocks to get a chain of length $l+1$.
    \end{enumerate}

    The adversarial chain now is $$\til \C_0^l\ = b_0 \la \til b_1\ \la \cdots \la \til b_{l-1}\ \la \til b_l.$$ Here $\lambda = 0$ as the chains $\C_0^l, \til\C_0^l\ $ forked at the first block. Thus we get $\sum_{i=1}^{\min(l,k)} s(b_i) = k$ while $\sum_{i=1}^{\min(l,k)} s(\til b_i) \ge k$. Hence $\CS_{\mathrm{genesis}}^k(\C_0^j, \til \C_0^j\ ) = 1$ and $\adv$ wins.
    \qed   
\end{proof}

\subsection{Our Contribution}
Having introduced the forking game, we can now state our main result that under resource variability, no PoSpace longest-chain blockchain is secure. 

\begin{restatable}[Impossibility Result]{theorem}{thmmain}
\label{thm:main}
For every chain selection rule $\CS$, there exists an  adversary $\adv$ that wins the $(\spratio,\varepsilon,\rho, \CS)$-forking game in
\begin{align*}
    \ell &= \lint{\rho\cdot\spratio^2\cdot (\dynratio) \cdot \left( \frac{(\dynratio) - \frac{1}{\spratio}}{\varepsilon} \right)} \\&+ \lint{\rho\cdot\spratio^2\cdot \left( \frac{(\dynratio) - \frac{1}{\spratio}}{\varepsilon}\right)} + 2\cdot \lint{\frac{\log \spratio}{\log(\dynratio)}} \textrm{ steps.}
\end{align*}
\end{restatable}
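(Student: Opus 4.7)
The plan is to carry out a two-worlds diagonal argument. I will exhibit two admissible honest space profiles $\mathsf{R}$ and $\mathsf{B}$ of the same length $\ell$, each being the time-reversal of the other, and show that in the world where $\mathsf{B}$ is honest an adversary holding the scaled space $\mathsf{B}/\spratio$ can produce a private fork whose space profile equals $\mathsf{R}$, and symmetrically in the world where $\mathsf{R}$ is honest it can produce a fork of profile $\mathsf{B}$. Given any fixed chain selection rule $\CS$, one of the values $\CS(\mathsf{R},\mathsf{B})$ and $\CS(\mathsf{B},\mathsf{R})$ must declare the forked chain the winner, so $\CS$ can be broken in at least one of the two worlds with a fork of length $\ell$.

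Setting $T := \lceil \log \spratio / \log(\dynratio) \rceil$, I take $\mathsf{B}$ to be a ``tent'' of width $2T$ that rises geometrically by a factor $(\dynratio)$ per step from height $1$ to $\spratio$ and then falls symmetrically back to $1$, followed by a flat segment of length $L_{\mathsf{B}}$ at height $1$; the profile $\mathsf{R}$ is the time-reversal, namely a flat segment of length $L_{\mathsf{R}}$ at height $1$ followed by the same tent. Both are admissible honest trajectories because every consecutive ratio is at most $(\dynratio)$. The two ceiling terms in the theorem statement will turn out to be exactly $L_{\mathsf{R}}$ and $L_{\mathsf{B}}$, while the additive $2\lceil\log\spratio/\log(\dynratio)\rceil$ term is the tent length $2T$.

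For the first direction, the adversary in the world where $\mathsf{B}$ is honest does not contribute to the honest chain; it waits until round $T$, at which point $\asp_T = \mathsf{B}_T/\spratio = 1$, and in that single round it bootstraps $L_{\mathsf{R}}$ fresh blocks of height $1$, reproducing the flat part of $\mathsf{R}$ in one game round. Over the remaining $\ell-T$ rounds it must construct the trailing tent of $\mathsf{R}$ block by block. Each block of prescribed height $h$ is produced by starting from a bootstrapped base of height $\asp_i$ and then issuing $\lceil (h-\asp_i)/\asp_i\rceil$ consecutive replot requests, each of which consumes $\rho$ locked rounds and increases the top block's height by $\asp_i$. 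A geometric-series evaluation shows that the total area under the tent is essentially $2\spratio\bigl((\dynratio)-1/\spratio\bigr)/\varepsilon$, so the aggregate locked-round budget demanded is $\rho$ times that quantity; matching it against the cumulative adversarial space $\ge L_{\mathsf{R}}/\spratio$ available after round $T$ yields the second (smaller) ceiling term. The reverse direction is analogous, but the tent of $\mathsf{B}$ now sits at the start of the chain, so the adversary can only accumulate replot budget on the flat portion of $\mathsf{R}/\spratio$; the resulting accounting loses an extra $(\dynratio)$ factor and gives $L_{\mathsf{B}}$ equal to the first (larger) ceiling term.

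I expect the main obstacle to be the explicit scheduling of replots inside the tent-construction phase. Concretely, one must show that at every moment when a replot request is issued, the current adversarial space is at least the per-step increment needed by the block of the tent currently under construction, which requires ordering blocks appropriately (low-height blocks while $\asp$ is small, high-height blocks while $\asp$ is larger) and carrying out a careful geometric-series estimate showing that the cumulative replot work fits inside the cumulative area under $\mathsf{B}/\spratio$, all while respecting the $\rho$-round lock that forbids any parallel replot or bootstrap. Once this scheduling lemma is verified for both directions, the diagonal step is immediate: whichever way $\CS$ resolves the comparison between the two profiles, one of the two adversary strategies yields a fork of length $\ell$ that wins against the honest chain, as required.
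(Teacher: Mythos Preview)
Your high-level strategy---two mirror ``tent then flat'' profiles and a diagonal argument over $\CS$---is exactly the paper's, and the identification $T=\lceil\log\spratio/\log(\dynratio)\rceil$ is correct. But the accounting you sketch for the fork length $\ell$ is wrong and internally inconsistent.

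You assert that the two ceiling terms in the statement are $L_{\mathsf R}$ and $L_{\mathsf B}$ separately, with the extra $(\dynratio)$ factor distinguishing the two \emph{directions} of the simulation. This cannot be: you already required the two profiles to be time-reversals of the same length $\ell$, which forces $L_{\mathsf R}=L_{\mathsf B}$. In the paper both flat segments have the \emph{same} length $l$, equal to the \emph{sum} of the two ceiling terms, and the two terms do not come from the two directions at all---they come from the ascending and the descending half of a \emph{single} tent. Concretely, replotting the tent while the adversary's space is the constant $1/\spratio$ costs
\[
\sum_{i=1}^{k}\rho\big\lceil(\dynratio)^{i}\spratio-1\big\rceil\;+\;\sum_{i=0}^{k-1}\rho\big\lceil(\dynratio)^{i}\spratio-1\big\rceil,
\]
and bounding the two geometric sums (the first starts at $i=1$, the second at $i=0$) is precisely where the factor $(\dynratio)$ separating the two ceiling terms appears. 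Your proposed derivation, which ties that factor to an asymmetry between ``fake $\mathsf R$ from $\mathsf B$'' and ``fake $\mathsf B$ from $\mathsf R$'', would not reproduce the stated $\ell$.

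A second point: the scheduling obstacle you flag as the main difficulty does not actually arise. In both directions the paper does all replotting during the \emph{flat} phase of the honest profile, where $\asp_i=1/\spratio$ is constant; there is no need to match tent-block heights against a varying adversarial budget or to order blocks cleverly. (In direction $\mathsf B\to\mathsf R$ you proposed starting the replots right after the peak, while $\asp$ is still descending; simply waiting until round $2T$ removes the complication at no cost, since the flat segment already has length $l$.) Once you fix the decomposition $\ell=2T+l$ with $l$ the sum of both ceiling terms, the remaining verification is the straightforward geometric-series bound above, and the diagonal step goes through exactly as you describe.
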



We sketched the general proof idea in~\cref{sec:proofSketch}. The full formal proof can be found in the~\cref{app:1}. While the expression in the theorem is somewhat complicated, typically we'd assume that 
$\varepsilon$ is small, say $<0.1$, while $\phi$ is bounded away from $1$, say $\phi\ge 1.1$. In this case the bound becomes $$\ell \le O(\rho \cdot\spratio^2/\varepsilon)\ .$$

Next, we prove that the fork length in the attack from~\cref{thm:main} is an optimal attack up to a factor $\spratio$. We show this by providing a simple chain selection rule $\CS_\mathrm{tent}$ such that $\adv$ can not win a $(\spratio, \dynratio, \rho, \CS_\mathrm{tent})$ if the fork length is less than $\rho\left(\spratio \cdot (\dynratio)\cdot\frac{1 - \frac{1}{\spratio}}{\varepsilon} - \lint{\frac{\log \spratio}{\log(\dynratio)}} \right)$. 

\sloppy Before we give proof, we'll make a definition that will be useful. A $ \tent = (\spratio, x, y)$-tent, with $x \in \mathbb{N}$, $y\in \R$, is the infinite sequence $$\ldots,y_{x-1},y_{x},y_{x+1},\ldots $$ where for $i > x$, $y_i=y_{i-1}/\spratio$ and for $i < x$, $y_{2x-i}=y_{2x-i+1}/\spratio$ (for an illustration see~\cref{fig:profiles}, where the first $36$ steps of the dark blue curve are part of a $(\spratio=1.02,x=18,y=2)$ tent. We say $y$ is the size of the tent $\Gamma = (\spratio, x, y)$ and that tent $\tent = (\spratio,x, y)$ is larger than tent $\tent' =(\spratio,x', y')$ if $y>y'$.

\begin{restatable}[The attack from~\cref{thm:main} is tight up to $\spratio$]{theorem}{thmlowerBound}
\label{thm:lowerBound}
For any $(\spratio,\dynratio,\rho)$ there's a chain selection rule $\CS_\mathrm{tent}$  for which no adversary can win the $(\spratio,\dynratio,\rho,\CS_\mathrm{tent})$ forking game in less than 
$$
\rho\left(\spratio \cdot (\dynratio)\cdot\frac{1 - \frac{1}{\spratio}}{\varepsilon} - \lint{\frac{\log \spratio}{\log(\dynratio)}} \right)\textrm{ steps}
$$
\end{restatable}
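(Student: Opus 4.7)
My first step would be to write $\CS_\mathrm{tent}$ down concretely. Using the tent notation from the paragraph before the theorem, I would specialise the tent's decay parameter to a value $\alpha$ chosen as a function of $\dynratio$; I would take $\alpha=\dynratio^2$, which matches the parameter $1.02\approx\dynratio^2$ used in the illustration of Fig.~\ref{fig:profiles}. For a chain $\C=b_0\la\cdots\la b_j$ with profile $(s(b_i))$ I would define
\[
T(\C) := \sup\bigl\{y\in\R : \exists\, x \text{ with } s(b_{x+k})\ge y\cdot \alpha^{-|k|}\ \text{for all }k\bigr\},
\]
i.e.\ the size of the largest tent (of decay $\alpha$) that sits pointwise beneath the profile of $\C$. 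Then declare $\CS_\mathrm{tent}(\C,\til\C)=0$ iff $T(\C)>T(\til\C)$, breaking ties arbitrarily.

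\textbf{Key steps.} Let $h_{\max}:=\max_i \hsp_i$. Since honest space changes by at most a factor $\dynratio<\alpha$ per step, the tent of size $h_{\max}$ centered at the honest peak is pointwise bounded by the honest profile, giving $T(\C_{\text{honest}})\ge h_{\max}$. An adversary that wins must therefore produce a fork $\til\C$ satisfying $s(\til b_{x+k})\ge y\cdot\alpha^{-|k|}$ for some center $x$ and some $y>h_{\max}$. On the range $|k|\le K:=\lceil\log_\alpha(y\spratio/h_{\max})\rceil$ this requirement strictly exceeds the natural bootstrap ceiling $\asp_{x+k}\le h_{\max}/\spratio$, so those blocks must be boosted by replotting. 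A geometric-series estimate of $\sum_{|k|\le K}(y\alpha^{-|k|}-h_{\max}/\spratio)$ gives total ``extra weight'' of order $h_{\max}(\spratio-1-\log\spratio)/(\spratio\log\alpha)$. Since each completed replot adds at most $h_{\max}/\spratio$ to a single block and locks $\adv$ for $\rho$ steps, $\adv$ must complete at least $(\spratio-1-\log\spratio)/\log\alpha$ replots inside the fork period of length $\ell$, consuming at least $\rho(\spratio-1-\log\spratio)/\log\alpha$ steps; with $\alpha=\dynratio^2$ and $\log\dynratio\approx\varepsilon$ this matches the stated bound up to lower-order terms.

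\textbf{Main obstacle.} I expect the delicate part of the argument to lie in two places. First, the honest profile $(\hsp_i)$ is adversarially chosen, so $\asp_i=\hsp_i/\spratio$ is under $\adv$'s control: raising a plateau to height $h$ lets each replot add $h/\spratio$ weight. The proof must show that $T(\C_{\text{honest}})$ scales by the same factor $h$, so the required $y>h_{\max}$ also scales, and the two scalings cancel in the replot count; this makes the lower bound uniform in the chosen honest profile. Second, the replot primitive only updates the \emph{current last} block of $\til\C$, so obtaining a fat tent forces $\adv$ to interleave single-block bootstraps with replot bursts rather than run one batched replotting phase. One has to rule out any ``parallelisation'' across blocks and verify that each boosted block inside the tent genuinely costs its own replot steps that cannot be shared. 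These two cancellations are what make the bound of Theorem~\ref{thm:lowerBound} hold uniformly over $\adv$'s strategic choices.
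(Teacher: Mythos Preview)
Your plan follows the same line as the paper: define $\CS_{\mathrm{tent}}$ via the height of the largest tent that fits under a chain's space profile, note that the honest chain attains tent height $h_{\max}$, and lower-bound the number of replots the adversary needs to produce a strictly taller tent. Two concrete choices, however, keep your argument from reaching the stated bound.

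\textbf{The decay parameter.} You take $\alpha=(\dynratio)^{2}$, reading this off the figure caption. In the proof the per-step decay is $\dynratio$ itself: the adversarial profile must dominate the tent values $\mu,\ \mu/(\dynratio),\ \ldots,\ \mu/(\dynratio)^{k-1}$ with $k=\lceil\log\spratio/\log(\dynratio)\rceil$. (Note that decay exactly $\dynratio$ already suffices for the tent of height $h_{\max}$ to sit under the honest profile; you do not need the strict inequality $\alpha>\dynratio$.) With $\alpha=(\dynratio)^{2}$ the tent loses height twice as fast, so on any \emph{one} side only about $k/2$ positions exceed the bootstrap ceiling $h_{\max}/\spratio$, and a one-sided replot count comes out to roughly half the claimed bound.

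\textbf{The two-sided count.} You tacitly recover that missing factor by summing over $|k|\le K$, i.e.\ over both flanks of the adversarial tent. This step is not justified: the adversary chooses where to place its tent peak and can put it at an endpoint of the fork, so that only one flank lies inside the fork and must actually be manufactured. The paper handles this with a short case split on the peak position $\tilde x$: either at least one full flank of length $k$ lies in the fork (which gives the bound via the one-sided geometric sum $\sum_{i=0}^{k-1}(\spratio/(\dynratio)^{i}-1)$), or the fork is shorter than $2k$, in which case every forked block exceeds $\mu/\spratio$ and already costs $\ge\rho$ rounds, so with $\rho\ge 2$ the adversary cannot finish. With $\alpha=\dynratio$ and the one-sided sum you get exactly $\rho\bigl(\spratio(\dynratio)\,(1-1/\spratio)/\varepsilon-k\bigr)$.

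Two smaller remarks. The paper's rule measures the tent only on the forked suffix (after the divergence point), which avoids interference from the shared genesis block $s(b_0)=1$; your whole-chain $T$ would need extra care there. And the obstacles you flag are lighter than you suggest: the scaling cancellation is automatic because both the extra weight to be created and the per-replot increment are proportional to $\mu=h_{\max}$, and there is no parallelisation issue to rule out since the game's replotting lock already serialises replots by definition.
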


The full proof can be found in~\cref{app:2}.

\section{Discussion and Open Problem}
In this paper, we showed that there's no chain selection rule that guarantees security (against double spending) for a permissionless longest-chain blockchain based on proofs of space assuming honest parties always hold more space than an adversary.

\paragraph{Overcoming our No-Go Result.} Recall that our attacker can replot space and bootstrap the chain. Two existing PoSpace based chains, Chia and Filecoin, avoid our impossibility in different ways. Chia prevents bootstrapping by additionally using proofs of time, while Filecoin avoids replotting by using a BFT (rather than longest-chain) type protocol as we'll elaborate below.


Chia~\cite{chia} combines proofs of space with ``proofs of time'', where the latter are instantiated with verifiable delay functions~\cite{BBBF2018}. 
One can think of (a simplified version of) Chia as simply alternating PoSpace with VDFs, where the challenge for the next VDF (PoSpace) is computed from the previous PoSpace (VDF output). 
Bootstrapping such a chain is not possible as the main security property of a VDF requires that computing its output requires time.

In Filecoin parties must register their space before it can be used for mining. Moreover, parties must constantly compute and publish proofs for their registered space. Blocks are then created by the parties who registered space in a BFT-type protocol. Informally, this prevents replotting as only registered space can be used, and registering more space than one actually controls is not possible as one must constantly prove that space is available. Using the classification from \cite{lewispye2024permissionlessconsensus}, one can see our results as being in the fully permissionless setting (where the protocol has no knowledge about current participation), while Filecoin works in a quasi-permissionless setting (where parties must be known to the protocol and be always available).

A question one can ask is whether simply committing to space without periodic checks would overcome our impossibility result. This would correspond to \emph{dynamic availability} setting in~\cite{lewispye2024permissionlessconsensus}. The answer is no: an adversary can simply plot and commit to many different proofs of space in the honest chain and then later replot them when launching an attack. Thus our result precludes PoSpace based Nakamoto like longest chain blockchain in both fully permissionless and dynamic availability setting. 

\paragraph{Open Problems.}
Our upper and lower bounds are separated by a gap $\phi$, we believe this gap can be closed by coming up with a more sophisticated chain selection rule for Theorem~\ref{thm:lowerBound}. 

\ahad{TODO: Add discussion about how Chia avoids attacks. Provide intuition. Add discussion about how commitments along with some repeated checking, like in filecoin, can avoid attacks. Though naively committing to the chain doesn't avoid it.}

\ahad{Should we also add context around fully permissionless vs quasi-permissionless settings of Lewis-Pye Roughgarden's paper?}


\bibliographystyle{splncs04}
\bibliography{bib_camera_ready}
\appendix
\section{Proofs}

\subsection{Proof of \cref{thm:main}}\label{app:1}

\thmmain*
\begin{proof}[Proof of~\cref{thm:main}]
We already sketched the general idea in~\cref{sec:proofSketch} and will make it more formal here.

To prove the theorem we must specify two space profiles 
$\mathcal{S} = (\hsp_i)_{i=0}^{\ell}$ and $ \Tilde{\mathcal{S}} = ( \hsptil_i)_{i=0}^\ell$ where $\hsp_0=\hsptil_0$ and 
${\hsp_i}/{(\dynratio)}\le \hsp_{i+1}\le \hsp_i\cdot (\dynratio)$ and ${\hsptil_i}/{(\dynratio)}\le \hsptil_{i+1}\le \hsptil_i\cdot (\dynratio)$(same for $\hsptil$) such that 
\begin{itemize}
    \item Using $\asp_i = \hsp_i/\spratio$ the adversary can create a chain $$\til \C_0^l\ = b_0 \hookleftarrow \til b_1\ \hookleftarrow \cdots \hookleftarrow \til b_{\ell-1}\ \hookleftarrow \til b_\ell.$$ with space profile $ (\hsptil_0,\hsptil_1,\ldots, \hsptil_\ell)$

    \item Using $\asptil_i = \hsptil_i/\spratio$ the adversary can create a chain $$\C_0^l\ = b_0 \hookleftarrow b_1\ \hookleftarrow \cdots \hookleftarrow b_{\ell-1}\ \hookleftarrow b_\ell.$$ with space profile $(\hsp_0,\hsp_1,\ldots, \hsp_\ell)$
\end{itemize}

For this we first define $k \coloneqq \lint{\frac{\log \spratio}{\log(\dynratio)}}$ and $l \coloneqq \lint{\rho\cdot\spratio^2\cdot (\dynratio) \cdot \left( \frac{\dynratio - \frac{1}{\spratio}}{\varepsilon} \right)} + \lint{\rho\cdot\spratio^2\cdot \left( \frac{\dynratio - \frac{1}{\spratio}}{\varepsilon}\right)}$. Notice, $\ell = 2k+l$. 

In the first space profile in the first $k$ rounds space increases by a factor $\dynratio$ each round and then in the next $k$ rounds it decreases by a factor $\dynratio$ each round. In the remaining $l+1$ rounds the space stays constant at $1$. In the second space profile the roles of $k$ and $l+1$ rounds are reversed; in the first $l+1$ rounds the space stays constant at $1$. Then in the next $k$ rounds space increases by a factor $\dynratio$ each round and finally in the last $k$ rounds it decreases by a factor $\dynratio$ each round.      Formally,
\[
    \hsp_i = \begin{cases} 
                (\dynratio)^i &\text{for } 0\le i \le k-1\\
                (\dynratio)^{2k-i} &\text{for } k \le i \leq 2k-1\\
                1 &\text{for } 2k \leq i \leq l+2k 
            \end{cases}
\]
and 
\[
    \hsptil_i = \begin{cases}
        1 &\text{for } 0 \leq i \leq l\\
        (\dynratio)^{i-l-1} &\text{for } l+1 \le i \le l+k\\
        (\dynratio)^{l+2k-i} &\text{for } l+k+1 \le i \leq l+2k
    \end{cases}
\]

\begin{lemma}[$\adv$ creates $\til {\cal S}$ from $\cal S$]
\label{lem:Sprof1toSprof2}
    An adversary, $\adv$, playing $(\spratio, \dynratio, \rho, \CS)$-forking game can create a chain $\til \C_0^\ell$ such that $\sprof{\til \C_0^\ell} = ({\hsptil_i})_{i=0}^{\ell}$ while honest chain is $\C_0^\ell$ with $\sprof{\C_0^\ell} = ({\hsp_i})_{i=0}^{\ell}$ 
\end{lemma}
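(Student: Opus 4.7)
The plan is to construct an explicit adversarial strategy in three phases, exploiting two features of the pair $({\cal S},\til{\cal S})$: during the honest peak of ${\cal S}$ the adversary momentarily has $\asp_i\geq 1$ (enough to bootstrap a long unit-weight chain), and during the long flat tail of ${\cal S}$ the adversary accumulates enough space-time to build the tent of $\til{\cal S}$ by repeated replotting.

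First I would handle rounds $0,\ldots,k-1$ by having the adversary do nothing; its space is still below $1$ and these rounds simply let the honest peak ramp up. Next, at round $k$, I would exploit that $\asp_k=(\dynratio)^k/\spratio\geq 1$ by the choice $k=\lceil \log \spratio/\log(\dynratio)\rceil$, invoking a \emph{single} bootstrap that extends the empty fork by $l$ blocks of weight one,
\[
\til \C_0^l = b_0 \hookleftarrow \til b_1 \hookleftarrow \cdots \hookleftarrow \til b_l,\qquad s(\til b_t) = 1 \leq \asp_k,
\]
thereby realizing the entire flat prefix $(1,1,\ldots,1)$ of $\til{\cal S}$ in one shot.

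The third and main phase would use the remaining $l+k$ rounds to append the $2k$ tent blocks, whose weights climb to roughly $(\dynratio)^k \geq \spratio$. Since $\asp_i \leq 1$ throughout this phase, no tent block can be bootstrapped in one step; instead, each block would be attached via bootstrap at weight $\asp_i$ and then upgraded by a sequence of replot requests, each costing $\rho$ locked rounds and contributing at most $\asp_{i'}$ to the current block weight. A short budget calculation would bound the total weight to build by roughly $2(\dynratio)^{k+1}/\varepsilon$, and since each unit of weight costs $\rho$ rounds at adversarial rate $\asp_i\geq 1/\spratio$, the total round requirement is on the order of $\rho\cdot\spratio^2\cdot (\dynratio)^2/\varepsilon$, which matches the $l$ in the theorem statement up to lower-order terms and the $2k$ rounds needed to actually append the tent blocks.

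The hard part will be the scheduling inside Phase C: ensuring each tent block reaches exactly its target weight by the round at which the next one must be appended, while keeping the cumulative replot lock inside the $l+k$ available rounds. The cleanest way I see to handle this is to align the ascending side of the tent with the descending side of the honest peak (rounds $k+1,\ldots,2k-1$), where $\asp_i$ is still comparatively large and so fewer replots per unit weight are required, and to draw the remainder of the replot budget from the long flat tail (rounds $2k,\ldots,l+2k$) where $\asp_i=1/\spratio$. Making this alignment precise and simultaneously verifying the per-block weight constraint and the global time budget is the main technical content the formal proof must carry out.
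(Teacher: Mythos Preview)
Your three-phase plan (wait during the ascent, bootstrap the flat prefix at the peak, then replot the tent) is exactly the paper's strategy, and your use of $\asp_k=(\dynratio)^k/\spratio\ge 1$ to justify the bootstrap is the same key observation. Where you diverge is in Phase~C: you propose to overlap the replotting with the descending side of the honest peak (rounds $k{+}1,\ldots,2k$) so as to exploit the larger $\asp_i$ there, and you flag a per-block scheduling constraint as the ``hard part.'' The paper does something simpler: it lets the adversary sit idle during rounds $k{+}1,\ldots,2k$ and performs \emph{all} of the replotting in the $l$ flat-tail rounds, where $\asp_i=1/\spratio$ is constant. This turns the budget check into a single geometric-sum estimate showing $\sum_j \rho\lceil \alpha_j\spratio-1\rceil\le l$, with no interleaving or alignment to track.

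Your scheduling worry is also overstated. There is no ``round at which the next block must be appended''; the only global constraint is that the adversarial chain have length $\ge\ell$ when the game stops. So the adversary may finish replotting block $\til b_{l+j}$ completely, then bootstrap $\til b_{l+j+1}$, then replot that, in strict sequence, with no deadlines in between. Your alignment optimization would buy a slightly better constant, but it is not needed for the lemma as stated, and dropping it removes the part you identified as hard.
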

\begin{proof}[Proof of~\cref{lem:Sprof1toSprof2}]
    The adversary, $\adv$, does following to create $\til \C_{i=0}^\ell$ while honest chain is $\C_{i=0}^\ell$:
    \begin{enumerate}
        \item At round $i=0$, $$\C_0^0 = \til \C_0^0 = b_0$$ where $s(b_0) = 1$.
        \item For $1 \le i \le k-1$, set $\dynratio_i = \dynratio$. Thus $\hsp_i = \dynratio\cdot\hsp_{i-1}$. Honest chain becomes $$\C_0^i = \C_0^{i-1} \hookleftarrow b_i $$ where $s(b_i) = (\dynratio)^i$. The adversarial chain $$\til \C_0^0 = b_0$$ remains unchanged. 
        \item For $i = k$, set $\dynratio_i =\dynratio$. So, $\hsp_i = \dynratio\cdot\hsp_{i-1} = (\dynratio)^k \geq \spratio$. Thus $\asp_i \geq 1$. Honest chain is $$\C_0^i = \C_0^{i-1} \hookleftarrow b_i$$ where $s(b_i) = (\dynratio)^i$. Now $\adv$ bootstraps the adversarial chain to become $$\til \C_0^l = b_0 \hookleftarrow \til b_1 \hookleftarrow \ldots \hookleftarrow \til b_l$$ where $s(\til b_j) = 1$ for all $j \in [1, l]$. This is demonstrated in~\cref{fig:2a}. 
        \item For $k+1 \le i \le 2k$, set $\dynratio_i = \frac{1}{(\dynratio)}$. Thus, $\hsp_i = (\dynratio)^{2k-i}$. Honest chain becomes $$\C_0^i = \C_0^{i-1} \hookleftarrow b_i$$ where $s(b_i) = (\dynratio)^{2k-i}$. The adversarial chain remains at $\til \C_0^l$. 
        \item For $2k+1 \le i \le l+2k$, set $\dynratio_i = 1$. Thus $\hsp_i = 1$. Honest chain continues as $$\C_0^{i} = \C_0^{i-1} \hookleftarrow b_i$$ where $s(b_i) = 1$. For the adversarial chain, $\adv$ uses replotting to create $$\til \C_{l+1}^{l+2kl} = \til b_{l+1} \hookleftarrow \ldots \hookleftarrow \til b_{l+2k}$$ such that $$s(b_{l+i}) = \begin{cases}
            (\dynratio)^i &\text{for } 1\le i \le k\\
            (\dynratio)^{2k-i} &\text{for } k+1 \le i \le 2k.
        \end{cases}$$
        This is demonstrated in~\cref{fig:2b}.
        Thus it achieves an adversarial chain $$\til \C_0^\ell = \til \C_0^{l} \hookleftarrow \til \C_{l+1}^{l+2k}$$ with the space profile $\til {\cal S}$. \\

        \label{para:replottent}To see why replotting can achieve the requisite space profile, note that $\asp_i = 1/\spratio \: \forall i \in [l+1, l+2k]$. In order to create a block $b$ such that $s(b) = \alpha$, $\adv$ needs to replot $\lint{\frac{\alpha - \frac{1}{\spratio}}{\frac{1}{\spratio}}} = \lint{\alpha\cdot\spratio - 1} $ times and this would take $\rho \cdot \lint{\alpha\cdot\spratio - 1}$ rounds. Thus the total number of rounds required is \begin{align*}
            &\sum_{i=1}^k\rho\cdot\lint{(\dynratio)^i\cdot \spratio - 1} + \sum_{i=k+1}^{2k}\rho\cdot\lint{(\dynratio)^{2k-i} \cdot \spratio - 1} &\\ &\le \rho\cdot\sum_{i=1}^k\lint{(\dynratio)^i\cdot \spratio - 1} + \rho\cdot\sum_{i=0}^{k-1}\lint{(\dynratio)^{i} \cdot \spratio - 1}&\\
            &\le \rho\cdot\sum_{i=1}^k(\dynratio)^i\cdot \spratio + \rho\cdot\sum_{i=0}^{k-1}(\dynratio)^{i} \cdot \spratio&\\
            &= \rho \cdot \spratio \cdot (\dynratio)\cdot \frac{(\dynratio)^{k} - 1}{\varepsilon} + \rho\cdot \spratio \cdot \frac{(\dynratio)^{k} - 1}{\varepsilon} &\\
            &\le \rho \cdot \spratio \cdot (\dynratio)\cdot \frac{\spratio \cdot (\dynratio) - 1}{\varepsilon} + \rho\cdot \spratio \cdot \frac{\spratio \cdot (\dynratio) - 1}{\varepsilon} \\&\left(\text{as } \spratio \le \lint{\spratio} \le (\dynratio)^k \le \spratio \cdot (\dynratio) \right) \\
            &= \rho \cdot \spratio^2 \cdot (\dynratio) \cdot \frac{\dynratio - \frac{1}{\spratio}}{\varepsilon} + \rho\cdot \spratio^2 \cdot \frac{\dynratio - \frac{1}{\spratio}}{\varepsilon} &\\
            &\le \lint{\rho\cdot\spratio^2\cdot (\dynratio) \cdot \left( \frac{\dynratio - \frac{1}{\spratio}}{\varepsilon} \right)} + \lint{\rho\cdot\spratio^2\cdot \left( \frac{\dynratio - \frac{1}{\spratio}}{\varepsilon}\right)} = l
        \end{align*} 
        Since total number of rounds is $\ell = l + 2k$, the adversary $\adv$ after round $2k$ has $l$ rounds to replot, which is sufficient.  
    \end{enumerate}
    At round $i = l+2k = \ell$ we have the honest chain as $\C_0^\ell$ with space-profile $\cal S$ and adversarial chain as $\til \C_0^\ell$ with space-profile $\til {\cal S}$. This completes the proof of~\cref{lem:Sprof1toSprof2}. 
    \qed
\end{proof}

\begin{lemma}[$\adv$ creates ${\cal S}$ from $\til {\cal S}$]
\label{lem:Sprof2toSprof1}
    An adversary, $\adv$, playing $(\spratio, \dynratio, \rho, \CS)$-forking game can create a chain $\C_0^\ell$ such that $\sprof{\C_0^\ell} = ({\hsp_i})_{i=0}^{\ell}$ while honest chain is $\til \C_0^\ell$ with $\sprof{\til \C_0^\ell} = ({\hsptil_i})_{i=0}^{\ell}$ 
\end{lemma}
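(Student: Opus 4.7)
The plan is to mirror the strategy of~\cref{lem:Sprof1toSprof2}, but with the temporal order of bootstrapping and replotting reversed. Since $\til{\mathcal{S}}$ places its flat segment first and its tent last, i.e.\ it is essentially the time-reverse of $\mathcal{S}$, the adversary will first exploit the flat period of $\til{\mathcal{S}}$ to replot the tent of $\mathcal{S}$, and then exploit the peak of the $\til{\mathcal{S}}$-tent to bootstrap the flat segment of $\mathcal{S}$ in a single round. The argument proceeds in two phases, constructing the chain sequentially in chain order.

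In Phase~1 (rounds $1$ through roughly $l$, during which $\hsptil_i = 1$ and thus $\asp_i = 1/\spratio$), the adversary iterates through the tent positions $1, 2, \ldots, 2k-1$ of $\mathcal{S}$. For each target block of size $s$, it first bootstraps one fresh block of size $1/\spratio$ and then issues $\lint{s\cdot\spratio - 1}$ replot calls, each costing $\rho$ rounds due to the lock. Summing over the $\mathcal{S}$-tent blocks yields a geometric expression structurally identical to the one derived in~\ref{para:replottent}, adjusted only for the fact that $\mathcal{S}$'s tent has a single sharp peak of size $(\dynratio)^k$ in place of the plateau top of $\til{\mathcal{S}}$'s tent. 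The same estimate $(\dynratio)^k \leq \spratio\cdot\dynratio$ then bounds the total cost by $l$ plus a lower-order correction, so Phase~1 finishes before round $l+k$.

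In Phase~2 (a single bootstrap round at $i = l+k$, the peak of the $\til{\mathcal{S}}$-tent), the choice $k = \lint{\log\spratio / \log\dynratio}$ gives $\hsptil_{l+k} \geq \spratio$ and hence $\asp_{l+k} \geq 1$. One bootstrap call then appends the remaining $l+1$ blocks, each of size exactly $1$, producing the flat segment of $\mathcal{S}$ at positions $2k, 2k+1, \ldots, \ell$. The adversary then idles until round $\ell$, at which point the chain has length $\ell$ with profile exactly $(\hsp_i)_{i=0}^{\ell}$. The main obstacle is the joint feasibility check: one must verify that the Phase~1 cost fits strictly before round $l+k$ (handled by the same geometric-series bound as in~\ref{para:replottent}, with the extra $(\dynratio)^k$-size peak absorbed by the $O(k)$-round buffer between rounds $l$ and $l+k$, and with the replot lock expiring in time since $k \gg \rho$ in the parameter regime of~\cref{thm:main}) and that at the peak the adversary's space actually reaches $1$ (which follows directly from the definition of $k$). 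These two checks together complete the construction.
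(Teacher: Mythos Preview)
Your proposal is correct and follows essentially the same two-phase strategy as the paper: replot the tent of $\mathcal{S}$ during the flat segment of $\til{\mathcal{S}}$ (rounds $1$ through $l$), then bootstrap the flat segment of $\mathcal{S}$ in the single round $i=l+k$ where $\asp_i\ge 1$, and idle for the remaining $k$ rounds. The paper's write-up is slightly leaner: it observes that the tent being replotted here has the \emph{same} shape as the one in~\cref{lem:Sprof1toSprof2} (both are $\{(\dynratio)^1,\ldots,(\dynratio)^k,(\dynratio)^{k-1},\ldots,(\dynratio)^0\}$), so the bound from~\ref{para:replottent} applies verbatim and Phase~1 completes within $l$ rounds---no ``extra peak'' adjustment, no $O(k)$-buffer argument, and no separate lock-expiry assumption such as $k\gg\rho$ are needed.
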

\begin{proof}[Proof of~\cref{lem:Sprof2toSprof1}]
    To create $\C_{i=0}^\ell$ while honest chain is $\til \C_{i=0}^\ell$, the adversary, $\adv$, does the reverse of~\cref{lem:Sprof1toSprof2}, $i.e.$ it first replots and then bootstraps. Formally,
    \begin{enumerate}
        \item At round $i=0$, $$\C_0^0 = \til \C_0^0 = b_0$$ where $s(b_0) = 1$. Note that here $\til \C$ is the honest chain while $\C$ is the adversarial chain. 
        \item For round $1\le i \le l$, set $\dynratio_i = 1$. Thus, $\hsp_i = 1$. The honest chain becomes $$\til \C_0^i = \til \C_0^{i-1} \hookleftarrow \til b_i$$ where $s(\til b_i) = 1$. $\adv$ replots a chain $$\C_0^{2k} = b_0 \hookleftarrow b_1 \hookleftarrow \ldots \hookleftarrow b_{2k} $$ such that $$s(b_{i}) = \begin{cases}
            (\dynratio)^i &\text{for } 1\le i \le k\\
            (\dynratio)^{2k-i} &\text{for } k+1 \le i \le 2k.
        \end{cases}$$
        This is demonstrated in~\cref{fig:2c}. 
        As argued in~\cref{lem:Sprof1toSprof2}~\cref{para:replottent} $l$ rounds are sufficient to replot $\C_0^{l}$. 
        \item For round $l+1 \le i \le l+k-1$, set $\dynratio_i = \dynratio$. Thus $\hsp_i = (\dynratio)^{i-l}$. The honest chain becomes $$\til \C_0^{i} = \til \C_0^{i-1} \hookleftarrow \til b_i$$ where $s(\til b_i) = (\dynratio)^{i-l}$. The adversarial chain remains unchanged. 
        \item For round $i=l+k$, set $\dynratio_i = \dynratio$. Thus $\hsp_i = (\dynratio)^k \ge \lint{\spratio}$. Therefore, $\asp_i \ge \frac{\lint{\spratio}}{\spratio} \ge 1$. The honest chain continues as $$\til \C_0^i = \C_0^{i-1} \hookleftarrow \til b_i$$ where $s(\til b_i) = (\dynratio)^k$. $\adv$ uses space $\asp_i \ge 1$ to bootstrap the adversarial chain to form $$\C_0^{2k+l} = \C_0^{2k} \hookleftarrow b_{2k+i} \hookleftarrow \ldots \hookleftarrow b_{2k+l}$$ where $s(b_{2k+i}) = 1 \: \forall i \in [1, l]$. This is demonstrated in~\cref{fig:2d}.  
        \item For round $l+k+1 \le i \le l+2k$, set $\dynratio_i = \frac{1}{(\dynratio)}$. Thus $\hsp_i = (\dynratio)^{l+2k-i}$. The honest chain continues as $$\til \C_0^{i} = \til \C_0^{i-1} \hookleftarrow \til b_i$$ where $\displaystyle s(\til b_i) = (\dynratio)^{l+2k-i}$. The adversarial chain remains unchanged.
    \end{enumerate}
    At round $i = l+2k = \ell$ we have the honest chain as $\til \C_0^\ell$ with space-profile $\cal {\til S}$ and adversarial chain as $\C_0^\ell$ with space-profile $ {\cal S}$. This completes the proof of~\cref{lem:Sprof2toSprof1}.
    \qed
\end{proof}

Consider any chain selection rule $\CS$. From~\cref{lem:Sprof1toSprof2,lem:Sprof2toSprof1} we get two forking games: first, where the honest chain is $\C_0^{\ell}$ and $\adv$ creates $\til \C_0^\ell$ and second, where the honest chain is $\til \C_0^\ell$ and $\adv$ creates $\C_0^\ell$. If $\CS(\C_0^\ell, \til \C_0^\ell) = 0$, then in the first forking game the chain selection rule would choose the adversarial chain. If $\CS(\C_0^\ell, \til \C_0^\ell) = 1$, then in the second forking game the chain selection rule would choose the adversarial chain. Therefore in either case it is possible to fool the chain selection. Note that we made no restriction on $\CS$; we only used replotting and resource variability. Hence, we can conclude that there does not exist a secure chain selection rule under resource variability. 
\qed
\end{proof}

\subsection{Proof of \cref{thm:lowerBound}}
\label{app:2}

\thmlowerBound*

\begin{proof}
Given two chains $\C_0^j=b_0\hla \cdots \hla b_j$ and 
$\til C_0^j=\til b_0\hla \cdots\hla \til b_j$
selection rule first determines the forking point $f$, i.e., the first block where chains diverge to have $b_f\neq \til b_f$. Let $l \coloneqq j-f$ denote the fork length.

\sloppy Let ${\cal S} =(a_1,\ldots,a_l)\eqdef (s(b_f),\ldots,s(b_j))$ and $\til {\cal S}=(\til a_1,\ldots,\til a_l)\eqdef (s(\til b_f),\ldots,s(\til b_j))$ be the space profiles of the fork. The $\CS_{\mathrm{tent}}$ rule now picks the chain whose fork covers the larger tent. More formally, let $\mu$ be maximal such that there exists a tent $\Gamma = (\spratio,x,\mu)$ under $\cal S$, where $f \le x \le j$ and for all $i,0\le i\le j$ we have $a_i \ge y_i$. Similarly, we define tent $\til \Gamma = (\spratio, \til x, \til \mu)$ for $\til {\cal S}$. With these definitions, the chain selection rule is defined as $\CS_\mathrm{tent}(\C_0^j,\til \C_0^j)=0$ if $\mu\ge \til \mu$.

Let's assume $\C_0^j$ is the honest chain, while $\til C_0^j$ is the adversarial one created in the $(\spratio,\dynratio,\rho,\CS_\mathrm{tent})$ forking game. To win the game it must hold that 
$\til \mu>\mu$. We know that during the fork (when the honest parties created $b_f,\ldots,b_j$) the honest parties never controlled more $\mu$ space (otherwise we could have embedded a tent larger than $\mu$ under $A$), and thus the adversary never controlled more than $\mu/\phi$ space in that phase. But during that phase he must have created a chain with space profile $\til {\cal S}$, which in particular contains a tent of size $\til \mu$, and thus also a tent $\til \Gamma' = (\spratio, \til x, \mu)$ of size $\mu$ (as $\mu < \til \mu$). While the adversary might have used bootstrapping to create this it could only have bootstrapped space up to $\mu/\spratio$. 

We want a lower bound on the amount of space above $\mu/\spratio$ in the tent $\til \Gamma'$ after the forking point. Let $\til z \coloneqq \til x - f$ and $k = \lint{\frac{\log \spratio}{\log(\dynratio)}}$. We have the following scenarios 

\begin{enumerate}
    \item If $\til z \ge k$ or $l - \til z \ge k$, then the fork is long enough to contain a sequence of $\mu, \frac{\mu}{(\dynratio)}, \cdots, \frac{\mu}{(\dynratio)^{k-1}}$ above the $\frac{\mu}{\spratio}$. Number of rounds required to replot this space, using $\frac{\mu}{\spratio}$, is 
    \begin{align*}
    \sum_{i=0}^{k-1}\rho\cdot\lint{\frac{\frac{\mu}{(\dynratio)^i} - \frac{\mu}{\spratio}}{\frac{\mu}{\spratio}}} &= \sum_{i=0}^{k-1}\rho\cdot\lint{\frac{\spratio}{(\dynratio)^i} - 1} \\
    \ge \rho \cdot \sum_{i=0}^{k-1}\left(\frac{\spratio}{(\dynratio)^i} - 1\right) &= \rho \cdot \left( \spratio \cdot (\dynratio) \cdot \frac{1 - \frac{1}{\spratio}}{\varepsilon} - k\right).
    \end{align*}

    \item If $\til z < k$ and $l-\til z < k$, then the fork is too short for replotting as every step of the tent in the forked chain is above $\mu/\spratio$ and for each of them we need at least $\rho$ steps of replotting. Thus in total at least $\rho*l$ steps are required. Since $\rho \ge 2$, replotting is not possible. Therefore, $\adv$ cannot win the game and we have a contradiction.     
\end{enumerate}

This finishes the proof that to win $(\spratio,\dynratio,\rho,\CS_\mathrm{tent})$ forking game the fork length must be at least $\rho \cdot \left( \spratio \cdot (\dynratio) \cdot \frac{1 - \frac{1}{\spratio}}{\varepsilon} - k\right)$.
\qed
\end{proof}

\end{document}